\documentclass{article}

\usepackage[english]{babel}

\usepackage[framemethod=TikZ]{mdframed} 
\usepackage{xcolor}

\definecolor{softblue}{rgb}{0.122, 0.435, 0.698}

\newmdenv[
  linewidth=1pt, 
  roundcorner=4pt,
  linecolor=softblue,
  innerleftmargin=6pt,
  innerrightmargin=6pt,
  innertopmargin=6pt,
  innerbottommargin=6pt
]{refereebox}

\usepackage[letterpaper,top=2cm,bottom=2cm,left=3cm,right=3cm,marginparwidth=1.75cm]{geometry}

\usepackage{datetime}
\usepackage{csquotes}
\usepackage{amsmath}
\usepackage{authblk}
\usepackage{amsthm}
\usepackage{amssymb}
\usepackage[colorlinks=true, allcolors=blue]{hyperref}
\usepackage[backend=biber,style=numeric]{biblatex}
\newcounter{common}[section]

\newtheorem{definition}[common]{Definition}

\newtheorem{theorem}[common]{Theorem}

\title{Gravitational Field of a Rotating Mass on an Expanding Universe}
\author{Antonio Peña Peña\footnote{Contact email: \href{mailto:antoniopenapena52@gmail.com}{antoniopenapena52@gmail.com}.}}
\affil{Higher Polytechnic School, Autonomous University of Madrid, 28049-Cantoblanco, Spain}

\newdate{date}{24}{07}{2024}
\date{\displaydate{date}}

\begin{document}
\maketitle

\begin{abstract}
We present a new exact solution to Einstein's field equations that unifies the Kerr black hole with Friedmann-Lemaître-Robertson-Walker cosmology. This metric reduces to Kerr-de Sitter in the appropriate limit and accounts for cosmological expansion dynamics both inside and outside the black hole. The model predicts a stationary mass, as well as a contracting ergosphere and event horizon with respect to the expanding cosmic rest frame. This result correctly generalises the McVittie metric to rotating masses or Kerr-de Sitter to arbitrary scale factors $a(t)$. Additionally, we find that the ergosphere tends to fade away with respect to the universe's expansion and no further interactions of dark energy with respect to the black hole's rotation.
\end{abstract}

\noindent The quest for a description of black holes (BHs) in Friedmann-Lemaître-Robertson–Walker (FLRW) cosmology spans nearly over a century, yet efforts towards this goal have encountered limited success. As noted by Farrah \textit{et al.} \cite{FarrahEtal}, early works date back to 1933 when McVittie \cite{McVittie} extended the Schwarzschild solution to FLRW spacetimes. A further study by Nolan \cite{Nolan} used MacVittie's result to build a non-singular solution for its interior, and other findings from Faraoni \& Jacques \cite{FaraoniJacques} feature interior time-varying mass, energy and pressure densities, as well as event horizons that comove with the universe's expansion. Yet even with these great successes, these models do not describe spinning BHs and are therefore incompatible with Kerr spacetimes. And even if they were, as Faraoni \& Jacques note, it is still difficult to interpret McVittie's and Nolan solutions as BHs because they are singular on the 2-sphere at $r = m/2$ (in natural units).

The Kerr \cite{Kerr} solution to Einstein's field equations describe to almost perfect precision the evolution and behaviour of rotating BHs according to observations such as the detection of gravitational waves from merging binary BHs \cite{AbbottEtal2019,AbbottEtal2021} and the direct imaging of supermassive BHs by the Event Horizon Telescope \cite{Akiyama2019}. This data confirms the Kerr metric for relatively short timescales (from milliseconds to hours) and considerably wide spatial scales (up to milliparsecs). 

Nevertheless, even though the Kerr metric describes BHs in a magnificent way on short timescales, it was built upon some boundary conditions that may not be physically representative for our universe \cite{WiltshireEtal}. Namely, the assumption of an asymptotically flat spacetime does not concord with an expanding universe following the rules of FLRW cosmology, which has also been confirmed to very high precision \cite{AghanimEtal,DodelsonSchmidt}. This disparity might not seem important at first glance, especially given the fact that spatial expansion is negligible over not-too-long timescales (like the BHs interior and vicinity); but in reality they are key to understanding the long-term evolution of these dark entities. Recently, Farrah \textit{et al.} \cite{FarrahEtal} posed the question that because we still lack a solution linking local Kerr to the FLRW limit, it is in theory possible for BHs to be the source of dark energy, also removing the singularity at its centre. This is supported by the empirical evidence they set forth, only contested by some relatively stringent constraints that were recently raised by Lei \textit{et al.} \cite{LeiEtal}, which shows that data from the James Webb Space Telescope conflict up to $2\sigma$ with that of Farrah \textit{et al}. Furthermore, according to the method of matched asymptotic expansion analysed by d'Eath \cite{dEath}, BH solutions like Schwarzschild's or Kerr's are perfectly valid solutions near its vicinity, which means that the overall FLRW spacetime does not, in principle, relate with the body in any way. Yet, does dark energy act in the same way and does not influence the BH's dynamics under no circumstance? Fortunately, a general and exact solution of a Kerr BH embedded in an FLRW spacetime would, in principle, solve this dispute once and for all.

Other works treated in the literature hinge around the effects of ``dark energy" (DE) inside BHs (see \cite{BabichevEtal} for a comprehensive review and interesting results) more deeply than trying to match FLRW cosmology at infinity. DE is arguably the best explanation for the source driving the accelerated expansion of the universe. ``Dark" means that we cannot detect it except for its gravitational effects, and ``energy" appeals to how it can be described by defining a static homogeneous energy permeating all of space with negative pressure ($w=-1$ in the equation of state) in the energy-momentum tensor. The status of DE research can be reviewed, for example, in \cite{CherninEtal2010,Chernin2013,KlimenkoEtal,KurtShakhvorostova,CopelandEtal}. Eerily, DE requires drastic fine-tuning of field theory parameters \cite{Tin,JoseEliel} primarily due to the very tiny value the cosmological constant $\Lambda$. In order to palliate this coincidence, many attempts have been pursued in the literature to describe the expansion as driven by some other forces. The most famous proposals are scalar fields with flat potentials called ``quintessence" \cite{Tsujikawa}, models with a non-trivial kinetic term (the ``$k$-essence") \cite{JorgeMimoso}, phantom-energy (same than DE but with $w < -1$) \cite{BabichevEtal}, and more. Plenty of modifications to General Relativity have also been proposed, such as $f(R)$-gravity \cite{SotiriouFaraoni} or $\kappa(R, T)$-gravity \cite{Teruel}.

In this work, we find a solution that correctly reduces to Kerr's and/or FLRW's in the appropriate cases ending the centenary search for such a metric and roughly discarding all approximate conclusions previously mentioned (expanding event horizons, evolving mass and energy density...). Our results go hand in hand with McVittie's solution by describing a shrinking event horizon and ergosphere (although both conformally stationary), and it also features a disk singularity at $r=0$, $\theta = \pi / 2$. We do not detect any comoving event horizon nor a singularity-free interior as Nolan \cite{Nolan} or Sultana-Dyer \cite{SultanaDyerFaraoni} do. The main reason is that we stick to a perfect fluid energy-momentum tensor with diagonal form without considering ulterior conditions, such as a two-fluid based, non-diagonal $T_{\mu\nu}$.

Throughout this work, we adopt the $(+---)$ metric signature. Latin indices are used for spatial coordinates ($i,j,\ldots = 1,2,3$ or $i,j,\ldots = r,\theta,\varphi$), and Greek indices are used for spacetime coordinates ($\mu, \nu, \ldots = 0,1,2,3$ or $\mu, \nu, \ldots = ct,r,\theta,\varphi$).

\section{Metric derivation}

We begin by considering the McVittie metric in isotropic $(ct, r, \theta, \varphi)$ coordinates

\begin{equation}\label{McVittieMetric}
    ds^2 = \left(\frac{1 - \mu/a(t)r}{1 + \mu/a(t)r}\right)^2c^2dt^2 - \frac{a(t)^2}{K^2}\left(1 + \frac{\mu}{a(t)r}\right)^4\left(dr^2 + r^2d\theta^2 + r^2\sin^2{\theta}d\varphi^2\right),
\end{equation}

\noindent and

\begin{equation}
    \mu = \frac{GM\sqrt{K}}{2c^2}; \qquad K(r) = 1 + kr^2 = 1 + \frac{r^2}{4R^2},
\end{equation}

\noindent with $k\in\left\{-1, 0, 1\right\}$ the curvature parameter. The interpretation of this metric has long been discussed (see \cite{Poplawski}) and the main consensus is that it represents a point-mass in an expanding universe for $k=0$. However, it does not seem to have a reasonable physical meaning for $k=\pm 1$. As pointed out by Nolan \cite{NolanHawkingMass}, the Hawking-Hayward (bound source) \cite{HawkingMass} mass surrounded by a 2-sphere $S$ as measured at future null infinity $\mathcal{I}^+$ is defined to be

\begin{equation}
    M_{\text{HH}}(S) = \lim_{v \rightarrow \infty} -\kappa\int{\left(\Psi_2 + \sigma\lambda\right)dS},
\end{equation}

\noindent where

\begin{equation}
    \kappa = \frac{1}{\sqrt{64\pi^3}}\sqrt{\int{dS}},
\end{equation}

\noindent and all other terms have their usual meaning in Newman-Penrose (NP) notation. Additionally, $v$ is the affine parameter along the integral curves of $\ell^\alpha$

\begin{equation}
    \ell^\alpha = \frac{dx^\alpha}{dv},
\end{equation}

\noindent for tetrads $\left\{\ell^\alpha, n^\alpha, m^\alpha, \bar{m}^\alpha\right\}$. The result obtained by Nolan is

\begin{equation}
    M_{\text{HH}}(S) = \lim_{v\rightarrow\infty}{M\frac{h(r)^5}{w(r)^5}},
\end{equation}

\noindent where

\begin{equation}
    h(r) = \begin{cases}
        \sinh{r}, &\qquad k=-1 \\
        r, &\qquad k=0 \\
        \sin{r}, &\qquad k=1;
     \end{cases}
     \qquad
     w(r) = \begin{cases}
        2\sinh{\frac{r}{2}}, &\qquad k=-1 \\
        r, &\qquad k=0 \\
        2\sin{\frac{r}{2}}, &\qquad k=1.
     \end{cases}
\end{equation}

\noindent Since $r\rightarrow\infty$ as $v\rightarrow\infty$ we get

\begin{equation}
    M_{\text{HH}}(S) = \begin{cases}
        \infty, &\qquad k=-1 \\
        M, &\qquad k=0 \\
        \nexists, &\qquad k=1.
     \end{cases}
\end{equation}

\noindent Thus, the Hawking-Hayward bound source mass is sensible only when $k=0$. For this reason we are going to continue our derivation by setting $k=0$ and working the equations, although if in the future the literature finds some meaning to the cases $k=\pm 1$ in the McVittie's metric it might be a good fit to recover the general expression for $K$. In simpler terms, we are going to set $K=1$ from now on because it is the only realistic physical case. Accordingly, the metric becomes

\begin{equation}
    ds^2 = \left(\frac{1 - \mu/a(t)r}{1 + \mu/a(t)r}\right)^2c^2dt^2 - a(t)^2\left(1 + \frac{\mu}{a(t)r}\right)^4\left(dr^2 + r^2d\theta^2 + r^2\sin^2{\theta}d\varphi^2\right); \qquad \mu = \frac{GM}{2c^2},
\end{equation}

\noindent Moving forward, we can now transform the McVittie metric into advanced Eddington-Finkelstein (EF) coordinates following a radial trajectory ($\theta = \varphi = \text{const}$) with affine parameter $\lambda = r$ by finding solutions to

\begin{equation}\label{photonPath}
    \begin{split}
        \left|\left|\frac{d}{d\lambda}\right|\right|^2 = 0 \implies \left|\left|\frac{d}{d\lambda}\right|\right|^2 &= \left(\frac{\partial ct}{\partial \lambda} \frac{\partial}{\partial ct} + \frac{\partial r}{\partial \lambda} \frac{\partial}{\partial r}\right)\left(\frac{\partial ct}{\partial \lambda} \frac{\partial}{\partial ct} + \frac{\partial r}{\partial \lambda} \frac{\partial}{\partial r}\right) \\
        &= \left(\frac{\partial ct}{\partial r}\right)^2 \left(\frac{\partial }{\partial ct}\right)^2 + 1^2\left(\frac{\partial}{\partial r}\right)^2 + 2\frac{\partial ct}{\partial r}\frac{\partial}{\partial ct}\frac{\partial}{\partial r} \\
        &= \left(\frac{\partial ct}{\partial r}\right)^2 g_{tt} + 2\left(\frac{\partial ct}{\partial r}\right)g_{tr} + g_{rr} = 0.
    \end{split}
\end{equation}

\noindent Rearranging we obtain

\begin{equation}\label{tortoiseCoordinateDef}
    \frac{\partial ct}{\partial r} = \pm \frac{\partial r^*}{\partial r} = \pm a \frac{\mu^3_+}{\mu_-},
\end{equation}

\noindent for light beams going in ($-$) or out ($+$) of the BH. Here we defined $r^*\left(t\left(r\right), r\right)$ as the tortoise coordinate. Trying to solve for the exact expression of $r^*$ is excruciatingly hard but is not necessary. Note that even if $t$ depends on $r$ and $u$ (the constant of integration), $r^*$ does not depend on $u$. This is proven in Appendix \ref{rnotonu}. If we now perform the null transformation

\begin{equation}\label{efTransformation}
    u = ct - r^*; \qquad r_{\text{out}} = r,
\end{equation}

\noindent it is easy to see with multivariable calculus that

\begin{equation}
    \frac{\partial}{\partial u} = \frac{\partial ct}{\partial u}\frac{\partial}{\partial ct} + \frac{\partial r}{\partial u}\frac{\partial}{\partial r} = \frac{\partial}{\partial ct}
\end{equation}

\begin{equation}
    \frac{\partial}{\partial r_{\text{out}}} = \frac{\partial ct}{\partial r^*}\frac{\partial r^*}{\partial r_{\text{out}}}\frac{\partial}{\partial ct} + \frac{\partial r}{\partial r_{\text{out}}}\frac{\partial}{\partial r} = a \frac{\mu^3_+}{\mu_-}\frac{\partial}{\partial ct} + \frac{\partial}{\partial r}
\end{equation}

\noindent and since the metric are dot products of partial derivatives we can find the line element to be

\begin{equation}
    ds^2 = \left(\frac{1 - \mu / \alpha r}{1 + \mu / \alpha r}\right)^2du^2 + 2\alpha\left(1 - \frac{\mu}{\alpha r}\right)\left(1 + \frac{\mu}{\alpha r}\right)dudr - \left(1 + \frac{\mu}{\alpha r}\right)^4\alpha^2r^2d\Omega^2
\end{equation}

\noindent Note $a(t) \longrightarrow \alpha(u, r)$ after the frame change. Now, using the NP formalism \cite{LopezNPFormalism} we can express the metric as

\begin{equation}\label{metricNPformalism}
    g^{\mu\nu} = \ell^\mu n^\nu + \ell^\nu n^\mu - m^\mu\bar{m}^\nu - m^\nu\bar{m}^\mu,
\end{equation}

\noindent for the four null tetrads

\begin{equation}
    \begin{split}
        \ell^\beta &= \delta^\beta_r \\
        n^\beta &= \frac{1}{\mu_-\mu_+\alpha}\delta^\beta_u - \frac{1}{2\left(\mu_+\right)^4\alpha^2}\delta^\beta_r \\
        m^\beta &= \frac{-1}{\left(\mu_+\right)^2\alpha r\sqrt{2}}\left(\delta^\beta_\theta + \dfrac{i}{\sin{\theta}}\delta^\beta_\varphi \right) \\
        \bar{m}^\beta &= \frac{-1}{\left(\mu_+\right)^2\alpha r\sqrt{2}}\left(\delta^\beta_\theta - \dfrac{i}{\sin{\theta}}\delta^\beta_\varphi \right),
    \end{split}
\end{equation}

\noindent using the shorthand notation

\begin{equation}
    \mu_\pm = 1 \pm \frac{\mu}{\alpha r}.
\end{equation}

\noindent In what continues, we shall use a process that is similar to the Newman-Janis algorithm (NJA), but adapted to handle ``time complexification" (even though, as we shall see, we solve this by not complexifying or choosing a scheme at all). In NJA, we start with the transformations

\begin{equation}\label{algebraicallySpecialTransforms}
    u' = u + ij\cos{\theta}, \qquad r' = r + ij\cos{\theta},
\end{equation}

\noindent and apply the $r^2 \longrightarrow r^2 + j^2\cos^2{\theta}$ complexification scheme.

The standard NJA was originally formulated for stationary, vacuum or electrovacuum solutions of the Einstein equations, where the spacetime is of Petrov type D and possesses a Kerr–Schild structure. Since the McVittie spacetime is neither stationary nor vacuum, a direct extension of the NJA cannot be assumed \textit{a priori}. For this reason, in the present work we do not interpret the transformation employed here as a direct generalisation of the NJA. Instead, we adopt a different and more general viewpoint, namely that the transformation defines a complex tetrad ansatz which preserves the required symmetries and reduces to known tetrads in the appropriate limits.

The tetrad used before complexification is chosen such that it encodes spherical symmetry in the non-rotating case. The complex transformation is constructed in a way that preserves axial symmetry while introducing a rotational degree of freedom parametrised by $j$. When $j \longrightarrow 0$, the original McVittie tetrad is exactly recovered. When $\dot{a}(t) = 0$, the resulting tetrad reduces to the well-known Kerr tetrad in ingoing EF coordinates. Thus, the ansatz interpolates smoothly between two physically meaningful and mathematically consistent cases.

Unlike the usual NJA applications, the non-vacuum character of the McVittie spacetime requires that the transformed tetrad remain compatible with a perfect-fluid energy–momentum tensor. We explicitly verify that, after the transformation, the Einstein tensor retains the diagonal form associated with a perfect fluid without radial flow, the defining feature of McVittie-like embeddings. Thus, we have checked with software and numerical methods that the final metric given by Eq. (\ref{finalBLMetric}) does satisfy the exact FLRW equations, with no radial, polar or azimuthal flow. This property substitutes the usual Kerr–Schild style of justification that applies only in the vacuum or electrovacuum sector.

Generalised complex tetrad transformations have been successfully applied to non-vacuum spacetimes in the literature. In particular, the works of Drake \& Szekeres \cite{DrakeSzekeres2000} and Azreg-Aïnou \cite{azreg2014generating} show that NJA-like transformations may produce valid rotating solutions even when the underlying spacetime is not vacuum, provided the Einstein equations are verified directly.

Our approach is consistent with this philosophy. Regardless of the interpretation of the transformation itself, the resulting metric is validated by explicitly checking the Einstein equations. Therefore, the correctness of the solution does not rely on viewing the transformation as an extension of the original NJA, but on the direct verification that the transformed metric satisfies the Einstein equations.

In our case, one further problem arises with the time complexification; namely, what do we do with $\alpha(u, r)$? We do not count with a similar counterpart in Kerr-Schild form to apply this sort of transformation, and of course time has nothing to do with requiring compatibility with our derivation to the addition of charge (which, again, would be impossible). 

Thus, what we are going to do is (1) to avoid any further complexification at all and (2) to avoid the selection of any complexification scheme. Instead, we develop our aforementioned ansatz plus require a non-complexified transformation

\begin{equation}
    \alpha(u, r) \longrightarrow \lambda(u, r, \theta) \in \mathbb{R}
\end{equation}


\noindent Now, interpreting $j = J/Mc$ as the angular momentum $J$ per unit mass $M$, renaming $u' \longrightarrow u$, $r' \longrightarrow r$ we can express the transformed null tetrads using the usual formula $V'^\mu = V^\nu\partial x'^\mu / \partial x^\nu$ as

\begin{equation}
    \begin{split}
        \ell'^\beta &= \delta^\beta_r \\
        n'^\beta &= \frac{1}{\tilde{\mu}_-\tilde{\mu}_+\lambda}\delta^\beta_u - \frac{1}{2\left(\tilde{\mu}_+\right)^4\lambda^2}\delta^\beta_r\\
        m'^\beta &= \frac{-1}{\left(\tilde{\mu}_+\right)^2\lambda\left(r - ij\cos{\theta}\right)\sqrt{2}}\left(-ij\sin{\theta}\left[\delta^\beta_u + \delta^\beta_r\right] + \delta^\beta_\theta + \dfrac{i}{\sin{\theta}}\delta^\beta_\varphi \right) \\
        \bar{m}'^\beta &= \frac{-1}{\left(\tilde{\mu}_+\right)^2\lambda\left(r + ij\cos{\theta}\right)\sqrt{2}}\left(ij\sin{\theta}\left[\delta^\beta_u + \delta^\beta_r\right] + \delta^\beta_\theta - \dfrac{i}{\sin{\theta}}\delta^\beta_\varphi \right),
    \end{split}
\end{equation}

\noindent where now

\begin{equation}
    \mu_\pm \longrightarrow \tilde{\mu}_\pm = 1 \pm \frac{r GM}{2c^2\lambda\rho^2},
\end{equation}

\noindent with

\begin{equation}
    \rho^2 = r^2 + j^2\cos^2{\theta},
\end{equation}

\noindent If we drop the tildes for clarity after substituting these tetrads into Eq. (\ref{metricNPformalism}) we find the covariant non-zero components to be

\begin{equation}\label{metricFinalLambdaForm}
    \begin{split}
        g_{u\varphi} = g_{\varphi u} &= \frac{j\mu_-\left(\mu_- + \lambda\mu^3_+\right)\sin^2{\theta}}{\mu^2_+} \\
        g_{ur} = g_{ru} &= \lambda\mu_-\mu_+ \\
        g_{r\varphi} = g_{\varphi r} &= j\lambda\mu_-\mu_+\sin^2{\theta} \\
        g_{uu} &= \left(\frac{\mu_-}{\mu_+}\right)^2 \\
        g_{\theta\theta} &= -\lambda^2\left(\mu_+\right)^4\rho^2 \\
        g_{\varphi\varphi} &= -\lambda^2\mu^4_+\rho^2\sin^2{\theta} + 2j^2\lambda\mu_-\mu_+\sin^4{\theta} + j^2\frac{\mu^2_-}{\mu^2_+}\sin^4{\theta}.
    \end{split}
\end{equation}

\noindent To fix the exact expression of $\lambda$ we demand to have axisymmetry after the transformations

\begin{equation}
    du = \frac{\partial \zeta(t, r)}{\partial ct}cdt + \frac{\partial \zeta(t, r)}{\partial r}dr, \qquad d\varphi = d\phi + \chi(r) dr,
\end{equation}

\noindent and then set $g_{tr} = 0$. The expression for $\lambda$ becomes way simpler if $\partial \zeta(t, r) / \partial ct = \pm 1$, which can be easily attained if we set $\zeta(t, r) = -r^*$ as the tortoise coordinate. Note that from Eq. (\ref{tortoiseCoordinateDef})

\begin{equation}
    \frac{\partial r^*}{\partial ct} = \frac{\partial r}{\partial ct}\frac{\partial r^*}{\partial r} = 1.
\end{equation}

\noindent This operation effectively reverses the Eq. (\ref{efTransformation}) by producing

\begin{equation}
    du = -cdt - \frac{\partial r^*}{\partial r}dr, \qquad d\varphi = d\phi + \chi(r) dr,
\end{equation}

\noindent where the minus sign in front of $cdt$ has the effect of making $g_{t\phi}$ negative. It can also be chosen to be $du = cdt - \partial r^*/\partial r dr$, by setting $u = 2ct - r^*$, yet we chose the first simply as a matter of preference. If we now demand $g_{tr} = g_{r\theta} = 0$ we get

\begin{equation}
    \lambda(t, r, \theta) = \frac{\mu'_-}{\mu'^3_+\rho^2}\left(j^2\sin^2{\theta} + a(t)\frac{\mu'^3_+}{\mu'_-}\left[j^2 + r^2\right]\right), \qquad \chi(r) = \frac{-j}{j^2 + r^2},
\end{equation}

\noindent using the original

\begin{equation}
    \mu'_\pm = 1 \pm \frac{GM}{2c^2a(t)r}.
\end{equation}

\noindent Correspondingly, the final metric becomes

\begin{equation}\label{finalBLMetric}
    \begin{split}
        ds^2 &= \left(\frac{\mu_-}{\mu_+}\right)^2c^2dt^2 + \frac{2j\mu_-}{\rho^2\mu_+^2}\left(j^2 + r^2\right)\left( \mu_- - \mu^3_+a(t)\right)\sin^2{\theta}cdtd\phi \\
        &- \frac{\left(\left[j^2 + r^2\right]\mu^3_+a(t) -j^2\mu_-\sin^2{\theta}\right)^2}{\rho^2\mu^2_+}\left(\frac{dr^2}{j^2 + r^2} + d\theta^2\right) \\
        &- \frac{\sin^2{\theta}}{\rho^2\mu^2_+}\left(j^4\mu^6_+a^2(t) + 2j^2r^2\mu^6_+a^2(t) - j^2\left[j^2 + r^2\right]\mu^2_-\sin^2{\theta} + r^4\mu^6_+a^2(t)\right)d\phi^2,
    \end{split}
\end{equation}

\noindent with

\begin{equation}\label{defFinalMu}
    \mu_\pm = 1 \pm \frac{r\mu}{a(t)}\left(j^2 + r^2 -j^2r^2\frac{\mu - ra(t)}{\left[\mu + ra(t)\right]^3}a(t)\sin^2{\theta} \right)^{-1}, \qquad \mu = \frac{GM}{2c^2}, \qquad \rho^2 = r^2 + j^2\cos^2{\theta}.
\end{equation}

\noindent This metric gracefully returns exactly to McVittie's for $j \longrightarrow 0$ and even if difficult to see, to Kerr-de Sitter if we set $\dot{a}(t)/a(t) = H$. Nonetheless, thanks to Theorem \ref{transitivityTheorem} and Theorem \ref{tortoiseCoordinateDef}, it must follow that our metric reduces to Kerr-de Sitter since ours is diffeomorphic to Eq. (\ref{McVittieMetric}), this one is diffeomorphic to Schwarzschild-de Sitter and we can extend this metric through NJA diffeomorphisms up to Kerr-de Sitter, as we checked using software. Furthermore, we have checked that with our final metric from Eq. (\ref{finalBLMetric}) we have $R_{\mu\nu} = G_{\mu\nu} = 0$ for $\dot{a}(t) = 0$ and $G_{\mu\nu} = \Lambda g_{\mu\nu}$ for $\dot{a}(t)/a(t) = H$ with $3H^2 = \Lambda c^2$. 

To find the FLRW flat geometry we just need to perform the transformation

\begin{equation}
    \phi = \Phi + \frac{jA(t)}{r^2}, \qquad A(t) \equiv \int{\frac{1-a(t)}{ca^2(t)}cdt},
\end{equation}

\noindent and take the limit $r \longrightarrow \infty$ to get

\begin{equation}
    \begin{split}
        ds^2 &= c^2dt^2 + 2j\left(1 - a(t)\right)\sin^2{\theta}cdt\left(d\Phi + \frac{j}{r^2}\frac{\partial A(t)}{\partial ct}cdt - \frac{2jA(t)}{r^3}dr\right) \\
        &- a^2(t)\left(dr^2 + r^2d\theta^2\right) \\
        &- r^2a^2(t)\sin^2{\theta}\left(d\Phi + \frac{j}{r^2}\frac{\partial A(t)}{\partial ct}cdt - \frac{2jA(t)}{r^3}dr\right)^2 \\
        &= c^2dt^2 -a^2(t)\left(dr^2 + r^2d\theta^2 + r^2\sin^2{\theta}d\Phi^2\right),
    \end{split}
\end{equation}

\noindent which is the exact FLRW metric.

\section{Discussion}

In this work, we have derived an exact solution of the Einstein field equations describing a rotating point-mass embedded in an expanding universe. We studied the particular case of flat geometry, although it is not so complicated to generalise to $k=\pm 1$ if we would like to do so for any reason (for example, if we do not deem the $M_{\text{HH}}(S)$ problem as important). The result starts from McVittie's solution and generalises it to bodies with $J$ angular momentum. This metric satisfies all sensible energy conditions, which are the null, dominant, weak and strong.

Unfortunately, we could not find a simple, neat, and direct coordinate transformation that allows us to turn Eq. (\ref{finalBLMetric}) into the exact Kerr-de Sitter form for the right limit. However, thanks to proving Theorem \ref{transitivityTheorem} and Theorem \ref{continuousFamilyTheorem}, it follows that a coordinate transformation that brings our metric to Kerr-de Sitter form when $a(t) = \exp(Ht)$ \textit{necessarily exists}. This is because our metric is diffeomorphic to McVittie's and McVittie's is diffeomorphic to Kerr-de Sitter's when $a(t) = \exp(Ht)$ (we just need to go from $r$ to $R = ra(t)\left(\mu_+\right)^2$, the areal radius), so it ensues our metric necessarily has an infinite set of coordinate transformations leading us to Kerr-de Sitter's in the right limit.

It has to be remarked that our theoretical result does not imply any coupling to the expansion of the universe. Much on the contrary, our solution seems to point out to a shrinking BH with respect to the cosmic expansion. In practice this is a statement about conservation of energy, where a BH cannot grow at a pace $\sim a(t)^k$ for $k \neq 0$ as observed by Farrah \textit{et al.} \cite{FarrahEtal} using the model by Croker \textit{et al.} \cite{CrokerEtal}. This dynamic can only be obtained if we deem $T_{01} \neq 0$ as studied by Faraoni and Jacques \cite{FaraoniJacques}. According to the field equations, this condition describes the case of an accreting cosmic fluid into the BH, which may seem unrealistic unless we model DE as a dynamically moving fluid. But this behaviour is not compatible with a cosmological constant described by $G_{\mu\nu} - \Lambda g_{\mu\nu} = \kappa T_{\mu\nu}$, which seems to indicate that DE is the vacuum itself rather than an interacting field living on it. Thus, our metric alligns with traditional works in the literature (like that of d'Eath \cite{dEath}) and disproves the hypothesis raised among others by Farrah \textit{et al.} \cite{FarrahEtal} about the possibility of BHs being singularity-free sources of DE. Due to conservation of energy, as stated before, we do not see a growth in the BH's mass proportional to $a^3$ as the coupling hypothesis requires for it to be valid. It did not happen in the simpler McVittie's metric and it does not in this generalization. It is only by taking the cosmic fluid's perspective, which expands along the background, that the astrophysical body's size seems to shrink. On the contrary, it is only possible to have an apparently growing BH if the universe is contracting, an expected result if the dynamic-size effect is exclusively attributable to a frame choice.

Our metric started from an isotropic McVittie form, which makes finding the event horizon of our solution a rather non-trivial task. We cannot simply ask for $ 1 / g^{rr} = 0$ because, as in McVittie's case, we do not find any positive, real value of $r$ that achieves such a deed. More problematically, we cannot require $dr/dct = 0$ because it is a coordinate-dependent condition not applicable in our case. Its expression is given by

\begin{equation}\label{EHEquation}
    \frac{dr}{dct} = \pm \sqrt{\frac{-g_{tt}}{g_{rr}}} = \pm \frac{\rho\mu_-\sqrt{j^2 + r^2}}{\left(j^2 + r^2\right)\mu^3_+a(t) - j^2\mu_-\sin^2{\theta}},
\end{equation}

\noindent with $\mu_\pm$ given by Eq. (\ref{defFinalMu}) with $\pm$ representing outgoing ($+$) and ingoing ($-$) trajectories, respectively. Now if we try to find roots beyond the trivial $r = 0$, $\theta = \pi / 2$ we must enforce $\mu_- =0$ to get

\begin{equation}
    r = \frac{\mu}{2a(t)} \pm \sqrt{\frac{\mu^2}{4a^2(t)} - j^2},
\end{equation}

\noindent where we also required $\theta \in \{0, \pi\}$ as otherwise our computational algebra program does not find any enclosed form for $r$ in terms of $\theta$ and $t$. The problem is that this is the expression of $r$ for the ergosphere, not the event horizon. As we can see, the fact that $r$ becomes imaginary for $a(t) > \mu / 2j$ and $\theta = 0$ implies that the ergosphere fades away proportional to the growth of $a(t)$. However, the actual boundary of the causal past of future null infinity does remain hidden at first glance from our metric. The authors could not find a closed, analytic expression for the event horizon for general $a(t)$. Nonetheless, and for the matter being, we do find

\begin{equation}
    \left(j^2 + r^2\right)\left(1 - r^2\frac{H^2}{c^2}\right) - \frac{2GMr}{c^2} = 0,
\end{equation}

\noindent by working backwards the coordinate transformations performed up until now, defining $\dot{a}(t)/a(t) = H$ and working towards the Kerr-de Sitter metric. Clearly, this event horizon remains static as time goes by.

\appendix
\section{Independence of the tortoise coordinate with respect to the constant of integration}\label{rnotonu}

\begin{proof}
    By contradiction. The multivariable chain rule says that if we have a differentiable function $f\left(x, g\left(x, z\right)\right)$ at $x$ and $g\left(x, z\right)$ is also a differentiable function at $x$ and depends on $x$, then 

    \[
    \frac{\partial f}{\partial x} = \frac{\partial g}{\partial x}\frac{\partial f}{\partial g}.
    \]

    \noindent Now, let us define $r^*$ as given by Eq. (\ref{tortoiseCoordinateDef}). Assume it depends on $u$. Then, we must be allowed to express its partial derivative as

    \[
    \frac{\partial ct}{\partial u} = \frac{\partial r^*}{\partial u} \frac{\partial ct}{\partial r^*} = \frac{\partial r^*}{\partial u},
    \]

    \noindent where by Eq. (\ref{tortoiseCoordinateDef}) we have that $\partial ct / \partial r^* = 1$. Then, taking the partial derivative with respect to $u$ on both sides of Eq. (\ref{efTransformation}) we get

    \[
    \frac{\partial ct}{\partial u} = 1 + \frac{\partial r^*}{\partial u}
    \]

    \noindent if we now equate both sides we find

    \[
    \frac{\partial r^*}{\partial u} = 1 + \frac{\partial r^*}{\partial u} \implies 0 = 1,
    \]

    \noindent which is a contradiction.
\end{proof}

\section{Theorems on the Non-Uniqueness of Spacetime Reductions}\label{appendixB}

\begin{definition}
    Two spacetimes $(M, g)$ and $(N, h)$ are \textit{isometric} if there exists a smooth diffeomorphism $\Theta: M \to N$ such that the pullback of the metric $h$ is equal to the metric $g$ ($\Theta^*h = g$).
\end{definition}

\begin{theorem}\label{transitivityTheorem}
     \textbf{(Existence)} If $(M_C, g_C)$ is isometric to $(M_A, g_A)$, and $(M_A, g_A)$ is isometric to $(M_B, g_B)$, then there exists a direct isometry $\Theta: M_C \to M_B$.
\end{theorem}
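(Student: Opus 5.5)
The plan is to build the direct isometry as the composition of the two given ones and then check the two defining properties: that it is a smooth diffeomorphism, and that it pulls the metric back correctly. By hypothesis there is a smooth diffeomorphism $\Theta_1: M_C \to M_A$ with $\Theta_1^* g_A = g_C$, and a smooth diffeomorphism $\Theta_2: M_A \to M_B$ with $\Theta_2^* g_B = g_A$. I will set
\[
\Theta := \Theta_2 \circ \Theta_1 : M_C \to M_B
\]
and show that $\Theta$ is the required isometry.

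\textbf{Step 1: $\Theta$ is a diffeomorphism.} The composition of smooth maps is smooth, by the chain rule in local charts. The composition of bijections is a bijection, with inverse $\Theta^{-1} = \Theta_1^{-1} \circ \Theta_2^{-1}$. This inverse is again a composition of smooth maps, so it is smooth.

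\textbf{Step 2: the pullback identity.} I will first establish functoriality of the pullback on covariant $2$-tensors,
\[
(\Theta_2 \circ \Theta_1)^* = \Theta_1^* \circ \Theta_2^* .
\]
To prove it, fix $p \in M_C$ and $X, Y \in T_p M_C$. By the chain rule,
\[
d(\Theta_2\circ\Theta_1)_p = (d\Theta_2)_{\Theta_1(p)} \circ (d\Theta_1)_p .
\]
Hence
\[
(\Theta^* g_B)_p(X,Y) = (g_B)_{\Theta(p)}\bigl(d\Theta_2\, d\Theta_1 X,\; d\Theta_2\, d\Theta_1 Y\bigr) = (\Theta_2^* g_B)_{\Theta_1(p)}\bigl(d\Theta_1 X,\; d\Theta_1 Y\bigr).
\]
Substituting the two hypotheses in turn gives
\[
\Theta^* g_B = \Theta_1^*\bigl(\Theta_2^* g_B\bigr) = \Theta_1^* g_A = g_C .
\]

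The argument is elementary, and the only step needing care is the pointwise chain-rule computation in Step 2, which establishes that pullback reverses the order of composition. Nothing from the earlier sections of the paper is needed. The theorem is simply the transitivity part of the statement that isometry is an equivalence relation.
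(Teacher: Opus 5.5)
Your proof is correct and takes the same approach as the paper. You compose the two given isometries, note that a composition of diffeomorphisms is a diffeomorphism, and use $(\Theta_2\circ\Theta_1)^* = \Theta_1^*\circ\Theta_2^*$ to conclude $\Theta^* g_B = \Theta_1^* g_A = g_C$; the only difference is that you verify this pullback rule pointwise via the chain rule for differentials, whereas the paper simply invokes it.
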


\begin{proof}
    Let $\psi: M_C \to M_A$ and $\phi: M_A \to M_B$ be the isometries connecting the manifolds to the intermediate spacetime $A$. We define the composite map $\Theta_0 \equiv \phi \circ \psi$.
    
    Since the composition of diffeomorphisms is a diffeomorphism, $\Theta_0$ is a smooth bijection. Applying the pullback chain rule $(\phi \circ \psi)^* = \psi^* \circ \phi^*$:
    
    \begin{align*}
    \Theta_0^* g_B &= \psi^* (\phi^* g_B) \\
    &= \psi^* (g_A) \\
    &= g_C 
    \end{align*}
    
    \noindent Thus, $\Theta_0$ is a valid isometry connecting $C$ and $B$.
\end{proof}

\begin{definition}
    A Killing vector field $\xi$ on a manifold $(M, g)$ is a vector field that generates a one-parameter group of isometries. The flow of this field, $\Lambda_\tau: M \to M$ (where $\tau \in \mathbb{R}$ is the flow parameter), leaves the metric invariant: $\Lambda_\tau^* g = g$.
\end{definition}

\begin{theorem}\label{continuousFamilyTheorem}
    \textbf{(Non-Uniqueness)} If the intermediate spacetime $(M_A, g_A)$ possesses a continuous symmetry (a non-vanishing Killing vector field $\xi$), then there exists a continuous family of distinct diffeomorphisms $\Theta_\tau: M_C \to M_B$ that are all isometries.
\end{theorem}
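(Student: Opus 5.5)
The plan is to insert the flow of the Killing field between the two isometries $\psi$ and $\phi$ supplied by Theorem \ref{transitivityTheorem}. With $\psi: M_C \to M_A$ and $\phi: M_A \to M_B$ as in that theorem, and $\Lambda_\tau: M_A \to M_A$ the flow of $\xi$, I would set
\[
\Theta_\tau \equiv \phi \circ \Lambda_\tau \circ \psi, \qquad \tau \in \mathbb{R}.
\]
I will assume $\xi$ is complete, so that $\Lambda_\tau$ is a global diffeomorphism for every $\tau$. If $\xi$ is not complete, one restricts $\tau$ to a small interval and accepts local isometries; I would state this caveat explicitly.

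The first step is to show that each $\Theta_\tau$ is an isometry. Each $\Theta_\tau$ is a composition of diffeomorphisms, hence a diffeomorphism. The pullback chain rule then gives
\[
\Theta_\tau^* g_B = \psi^*\Lambda_\tau^*\phi^* g_B = \psi^*\Lambda_\tau^* g_A = \psi^* g_A = g_C ,
\]
where the middle equality uses $\Lambda_\tau^* g_A = g_A$, which is the defining property of a Killing flow in the definition above. This repeats the computation in the proof of Theorem \ref{transitivityTheorem}, with one extra factor. Continuity, and indeed smoothness, in $\tau$ follows from the smooth dependence of the flow on its parameter. At $\tau = 0$ we recover $\Theta_0 = \phi\circ\psi$.

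The main step, and the one needing the most care, is showing that the maps are distinct. Suppose $\Theta_\tau = \Theta_\sigma$. Since $\phi$ and $\psi$ are bijections, we may cancel $\phi$ on the left and $\psi$ on the right. This gives $\Lambda_\tau = \Lambda_\sigma$, that is, $\Lambda_{\tau-\sigma} = \mathrm{id}$.

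Pick a point $p$ with $\xi(p) \neq 0$, which exists because $\xi$ is non-vanishing. The curve $s \mapsto \Lambda_s(p)$ is an integral curve with nonzero velocity at $s=0$. It is therefore locally injective, so there is $\epsilon > 0$ with $\Lambda_s(p) \neq p$ for $0 < |s| < \epsilon$. Hence $\tau \mapsto \Theta_\tau$ is injective on $(-\epsilon/2, \epsilon/2)$, which already gives a continuous one-parameter family of distinct isometries.

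Globally the flow may be periodic. This happens, for example, when $\xi = \partial_\varphi$, where $\Lambda_{2\pi} = \mathrm{id}$. So I would state the conclusion as distinctness modulo the period, or for $\tau$ in an interval shorter than the period. This caveat about periodic orbits is the only real obstacle; the rest is formal.
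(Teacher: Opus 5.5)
Your proposal is correct and follows the paper's proof: it uses the same family $\Theta_\tau=\phi\circ\Lambda_\tau\circ\psi$ and the same pullback chain-rule computation with $\Lambda_\tau^* g_A = g_A$. Your distinctness argument (cancelling $\phi$ and $\psi$, then using local injectivity of the integral curve through a point with $\xi(p)\neq 0$), together with the completeness and periodicity caveats, makes rigorous what the paper only asserts informally; indeed, the paper's claim that every $\tau\neq 0$ gives a distinct map fails for periodic flows such as that of $\partial_\varphi$, exactly as you note.
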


\begin{proof}
    Instead of a single map, we construct a parameterized family of maps. Let $\Lambda_\tau$ be the flow generated by the Killing vector field $\xi$ on $M_A$. We define the family $\Theta_\tau$ by flowing the mapping through the symmetry of $A$:
    
    $$ \Theta_\tau \equiv \phi \circ \Lambda_\tau \circ \psi: M_C \to M_B $$
    
    \noindent For any $\tau \neq 0$, the map $\Lambda_\tau$ shifts points in $M_A$ along the integral curves of the Killing vector (e.g., time translation or axial rotation), resulting in a physically distinct mapping between points in $M_C$ and $M_B$. However, since $\Lambda_\tau$ is generated by a Killing vector, it preserves the metric $g_A$ (i.e., $\Lambda_\tau^* g_A = g_A$).
    
    We verify the isometry condition for the entire family:
    
    \begin{align*}
    \Theta_\tau^* g_B &= (\phi \circ \Lambda_\tau \circ \psi)^* g_B \\
    &= \psi^* \left( \Lambda_\tau^* \left( \phi^* g_B \right) \right) \\
    &= \psi^* \left( \Lambda_\tau^* g_A \right) \\
    &= \psi^* \left( g_A \right) \\
    &= g_C
    \end{align*}
    
    \noindent Therefore, the reduction from $M_C$ to $M_B$ is not unique. There are infinitely many distinct direct diffeomorphisms $\Theta_\tau$, parameterized by $\tau$, that map the geometry of $C$ exactly onto the geometry of $B$.
\end{proof}

\printbibliography

\end{document}